\newtheorem{theorem}{Theorem}
\newtheorem{definition}{Definition}
\newtheorem{lemma}{Lemma}
\newtheorem{prob}{Problem}
\newtheorem{assumption}{Assumption}
\newcommand{\printfnsymbol}[1]{%
  \textsuperscript{\@fnsymbol{#1}}%
}
\pgfplotsset{compat=newest}
\pgfplotsset{compat=1.8}
\title{Blending Controllers via Multi-Objective Bandits}
\author{
  Parham Gohari\thanks{Equal contribution} \\
  Department of Electrical and Computer Engineering\\
  University of Texas at Austin \\
  \texttt{pgohari@utexas.edu} \\
   \And
   Franck Djeumou\printfnsymbol{1} \\
   Department of Electrical and Computer Engineering \\
   University of Texas at Austin \\
   \texttt{fdjeumou@utexas.edu} \\
   \AND
   Abraham P. Vinod \\
   Oden Institute for Computational Engineering and Sciences \\
   University of Texas at Austin \\
   \texttt{aby.vinod@gmail.com} \\
   \And
   Ufuk Topcu \\
   Department of Aerospace and Mechanical Engineering \\
   University of Texas at Austin \\
   \texttt{utopcu@utexas.edu}
}
\begin{document}

\maketitle

\begin{abstract}

Safety and performance are often two competing objectives in sequential decision-making problems.
Existing performant controllers, such as controllers derived from reinforcement learning algorithms, often fall short of safety guarantees.
On the contrary, controllers that guarantee safety, such as those derived from classical control theory, require restrictive assumptions and are often conservative in performance.
Our goal is to blend a performant and a safe controller to generate a single controller that is safer than the performant and accumulates higher rewards than the safe controller.
To this end, we propose a blending algorithm using the framework of contextual multi-armed multi-objective bandits.
At each stage, the algorithm observes the environment's current context alongside an immediate reward and cost, which is the underlying safety measure.
The algorithm then decides which controller to employ based on its observations.
We demonstrate that the algorithm achieves sublinear Pareto regret, a performance measure that models coherence with an expert that always avoids picking the controller with both inferior safety and performance.
We derive an upper bound on the loss in individual objectives, which imposes no additional computational complexity.
We empirically demonstrate the algorithm's success in blending a safe and a performant controller in a safety-focused testbed, the Safety Gym environment.
A statistical analysis of the blended controller's total reward and cost reflects two key takeaways:
The blended controller shows a strict improvement in performance compared to the safe controller, and it is safer than the performant controller.

\end{abstract}

\section{Introduction}
\label{sec:introduction}

Designing autonomous systems that are both safe and well-performing is a significant challenge in artificial intelligence research.
On the one hand, reinforcement learning algorithms offer sophisticated controllers that perform a given task efficiently, yet they often fall short of safety guarantees \cite{zhu2019inductive, alshiekh2018safe}.
On the other hand, controllers that offer safety guarantees, \textit{e.g.}, controllers derived from classical control theory, are often conservative in performance \cite{sadraddini2018formal, fulton2018safe, chow2018lyapunov, berkenkamp2017safe}.
An intuitive solution might be to blend a safe and a performant controller to obtain a single controller that is both safe and performant.
In this paper, we investigate the possibility of blending such controllers.\par

We define ``blending controllers'' as learning a switching strategy between a given safe and a given performant controller using the observations and feedback from the environment.
We assume that the environment dynamics are unknown, and that, upon taking an action, the environment issues the agent with a feedback vector containing a one-step reward and an auxiliary cost that measures the safety of the action.
For example, the environment might issue a cost whenever the agent at the proximity of an obstacle.
The framework of auxiliary costs for measuring safety is conventional in reinforcement learning algorithms that respect safety~\cite{altman1999constrained, chow2017risk, dalal2018safe}, and is not limited to this work.
The performant controller achieves a higher expected total reward than the safe controller, whereas the safe controller has a lower expected total cost.\par 

Switching between the safe and the performant controller without proper measures may take the agent to a state that neither the safe controller renders as safe nor had it been experienced by the performant controller's underlying reinforcement learning algorithm.
We require a blending algorithm to justify its choice of controllers according to the \textit{Pareto dominance relationship}, \textit{i.e.}, we require the algorithm to avoid choosing a controller with both inferior safety and performance measures.  

We formally state the problem of blending controllers as follows: Fix a safe and a performant controller.
Let the environment associate every state-action pair with a two-dimensional feedback vector consisting of the one-step reward and safety measure.
Consider an expert who always avoids choosing the controller with a Pareto dominated feedback vector.
Then, design an online learning algorithm whose cumulative deviations from the expert's choice converges to zero on average.

In this paper, we propose a solution to the problem of blending controllers using \textit{contextual multi-objective multi-armed bandit algorithms} \cite{lattimore2018bandit}, wherein the safe and performant controllers are the arm set, performance and safety are the objectives, and the observations from the environment are the context.
The multi-objective formulation enables the algorithm to consider each safety requirement as a single objective, which simplifies the modeling of safety requirements.
For example, in autonomous driving, we may consider keeping the vehicle centered in the lane as one objective and avoiding obstacles as another.
Moreover, the algorithm is compatible with any number of input controllers. 
Therefore, for the example of autonomous driving, the algorithm may employ multiple safe controllers, each of which is safe with respect to some safety requirement.
We utilize the above formulation to develop an algorithm that solves the problem of blending controllers.

The main contributions of this work are as follows: 
\begin{itemize}
    \item \emph{Propose a novel contextual multi-armed multi-objective bandit algorithm for blending controllers.}
    The algorithm maintains an optimistic estimate of the next-step feedback for every arm. 
    These estimations, on which the algorithm bases its choice of arms, become more accurate as time progresses.
    The algorithm then picks the arm with the smallest estimated loss in individual objectives, and we show that such a decision rule leads to picking an arm whose estimated next-step feedback vector is not Pareto dominated by any other arm.
    
    \item \emph{Demonstrate that the algorithm's cumulative deviations from the expert's choice converges to zero on average.}
    We use the notion of \textit{Pareto regret}~\cite{drugan2013designing} to penalize the agent whenever it chooses a Pareto dominated arm.
    We then show that the Pareto regret of the algorithm is sublinear, which implies that the average Pareto regret converges to zero asymptotically.
    
    \item \emph{Establish a probabilistic bound on the average maximal loss in individual objectives.}
    The average maximal loss in individual objectives is a more intuitive performance measure for the problem of blending controllers than Pareto regret, a conventional performance score for multi-objective bandit algorithms. 
    The bound is directly computed from the estimates that the algorithm maintains; therefore, it imposes no additional computational complexity and can be computed on-the-fly.

\end{itemize}


We use \textit{Safety Gym}~\cite{raybenchmarking},
a testbed for reinforcement learning algorithms that respect safety, to
demonstrate the algorithm's effectiveness in blending controllers. 
In our experiments, we cover three levels of task and safety complexities in Safety Gym environments.
We construct the performant and the safe controllers using deep reinforcement learning methods \cite{mnih2015human}.
We generate the context for the proposed bandit algorithm using the action values estimated by the underlying neural networks.
In each environment, an analysis of the statistics of the reward and cost of the blended controller confirms that the blended controller shows a significant improvement in its safety when compared to the performant controller and in its performance when compared to the safe controller.

The rest of this paper proceeds as follows: In Section \ref{sec:preliminaries}, we fix the notation and definitions used throughout the paper followed by the problem statement. In Section \ref{sec:theory}, we introduce the algorithm that we propose for blending controllers as well as the theoretical developments. We discuss the numerical results in Section \ref{sec:numerical-examples}, and review the related works in Section \ref{sec:related_work}. We provide conclusions and directions for future research in Section \ref{sec:conclusions}. Finally, we take a step back from the technicalities and discuss the potential societal impacts of this work in Section \ref{sec:broader impacts}.

\section{Preliminaries and problem statement}
\label{sec:preliminaries}

\subsection{Notation} We denote the set of real numbers by $\mathbb{R}$, non-negative reals by $\mathbb{R}_+$, and natural numbers by $\mathbb{N}$.
For any $n\in\mathbb{N}$, $[n] := \{1,\dots,n\}$.
Let $v\in\mathbb{R}^n$ and $i\in[n]$, then $v^\top$ is the transpose and $(v)_i$ is the $i^\text{th}$ component of $v$.
For any $u,v \in \mathbb{R}^n$, the inner product of $u$ and $v$ is denoted by $u \cdot v$. Let $v\in\mathbb{R}^n$ and $W\in\mathbb{R}^{n\times n}$, then, $\|v\|_{W}$ is the matrix norm of $v$ with respect to $W$, \textit{i.e.,} $\|v\|_{W}^2 := v^\top W v$, and $\|v\|_2$ is the second norm of $v$.

Let $u,v\in\mathbb{R}^n$, then
$u$ is said to \textit{Pareto dominate} $v$, denoted $u
\succ v$ if and only if, for all $i\in[n]$, we have that
$u_i \ge v_i$ and there exists $j\in[n]$ such that $u_j$ is strictly greater than $v_j$.
We use notation $u \not \succ v$ if $u$ is \textit{not Pareto dominated} by $v$, \textit{i.e.,} $v \succ u$ or there exists $i,j\in[n]$ such that $i\neq j$, $v_i > u_i$, and $v_j < u_j$.


\subsection{Contextual multi-objective bandits}

In this section, we establish the definitions corresponding to c\textit{ontextual multi-armed multi-objective bandits}.
We denote the arm set of the bandit by $\mathcal{X}$, the
environment state space by $\mathcal{Z}$, and the learning
horizon by $T$. At any stage $t\in[T]$, a context vector $\Psi_t\in \mathbb{R}^d$
characterizes the agent's observation of the environment.
The context vector is defined using a known feature mapping
$\psi: \mathcal{Z}\times\mathcal{X}\mapsto\mathbb{R}^d$.
Specifically, let $z_t\in\mathcal{Z}$ be the current state of the environment and $x_t\in\mathcal{X}$ be the arm picked by the algorithm, then $\Psi_t = \psi(z_t,x_t)$.
Upon pulling an arm, the environment issues the
agent with a feedback vector $y_t\in\mathbb{R}^m$. The
feedback consists of $m$ objectives that the bandit seeks to
simultaneously optimize. Without loss of generality, we assume that the objective measurements are normalized such that each unit measurement has equal importance amongst all objectives.


We use the notion of \emph{Pareto
regret}~\cite{drugan2013designing} as the underlying performance measure in multi-objective bandits to penalize the agent whenever it picks an arm whose corresponding feedback vector is Pareto dominated by another arm.
We also introduce an additional performance measure, the \emph{cumulative maximal
loss}, which provides a finer criterion to distinguish between the set of non-dominated arms.
In the following definition, we define \textit{Pareto suboptimality gap} \cite{2019arXiv190512879L} and the maximal loss in individual objectives, which we later use to compute the Pareto regret and cumulative maximal
loss of the algorithm.
\begin{definition}\label{def: PSG and maximal loss}
Let $z_t\in\mathcal{Z}$ denote the state of the environment
at stage $t$ and $x\in\mathcal{X}$.  Define $\mu_{t,x} :=
\mathbf{E}\left[y_{t} \mid x, z_t\right]$, the expected
value of the feedback vector corresponding to arm
$x$ at state $z_t$. Then, the Pareto
suboptimality gap of arm $x$ is
\begin{equation}
    \Delta_t(x) := \inf \left\{ \epsilon \in \mathbb{R}_+ \mid \mu_{t,x} + \epsilon \not \prec \mu_{t,x'}, \forall x'\in\mathcal{X}\right\},
\end{equation}
and the maximal loss due to arm $x$ is
\begin{equation}
    \epsilon_t(x) := \inf \left\{\epsilon \in \mathbb{R}_+ \mid \mu_{t,x} +
        \epsilon \succ \mu_{t,x'}, \forall
    x' \in \mathcal{X}\right\}.\label{eq:epsilon}
\end{equation}
\end{definition}
We illustrate the difference between the two measures in Definition \ref{def: PSG and maximal loss}
using a simple example. Let $\mathcal{X} = \{A,B\}$ and at some stage $t$, $\mu_{t,A} = [ 0 \quad 1]^\top$ and $\mu_{t,B} = [ 2 \quad 0 ]^\top$.
According to Definition \ref{def: PSG and maximal loss},  $\Delta_t(A) = 0$ and $\Delta_t(B) = 0$, whereas $\epsilon_t(A) = 2$ and $\epsilon_t(B) = 1$.
In this example, neither arm's expected value of the feedback vector Pareto dominates the other, which the value of Pareto suboptimality gaps confirms.
However, pulling arm $A$ incurs
a loss of $2$ in the first objective, $\epsilon_t(A) = 2$, whereas pulling arm B
incurs a loss of
$1$ in the second objective, $\epsilon_t(B) = 1$. As a result, the algorithm must pick arm $B$ over $A$, or in general, it must pick the arm with the least value of maximal loss in individual objectives.

Next, we define Pareto regret and cumulative maximal
loss based on a sequence of Pareto suboptimality gaps and maximal losses, respectively.
\begin{definition} \label{def: PR and CML}
Let $h_T :=(z_1,x_1,y_1,\dots,z_{T},x_{T},y_T)$ be a history of states, actions, and feedback vectors over the learning horizon, $T$. Then, the Pareto regret (PR) and cumulative maximal loss (CML) corresponding to history $h_T$ are defined as
\end{definition}
\begin{equation}
    \mathrm{PR}(h_T) := \sum\limits_{t\in[T]}
    \Delta_t(x_t),\qquad\mbox{ and }\qquad
    \mathrm{CML}(h_T) := \sum\limits_{t\in[T]} \epsilon_t(x_t).
\end{equation}
The expert, who always picks a non-dominated arm, achieves a zero Pareto regret. In bandit algorithms, however, it is assumed that the agent does not know the distribution of the feedback vector, which is consistent with the assumption of unknown environment dynamics in blending controllers. Under this assumption, a \textit{sublinear} Pareto regret is often the best outcome that one can expect from a bandit algorithm \cite{lattimore2018bandit}, \textit{i.e.,}
\begin{equation}
    \lim\limits_{T\to\infty}\frac{1}{T}\mathrm{PR}(h_T) = 0.
    \end{equation}
\subsection{Problem statement} \label{subsec: problem statement}

We consider a learning agent who is provided with a set of pre-defined controllers and seeks to pick the controller whose $m$-dimensional feedback from the environment is not Pareto dominated by another controller's feedback. We assume that the agent has access to a feature mapping, $\psi$, which generates the context vector at each state of the environment. With the given controllers as the arms of a contextual multi-armed multi-objective bandit, we formulate two problems.
\begin{prob} \label{problem: 1}
    Design an online algorithm for blending controllers that achieves a sublinear Pareto regret.
\end{prob}
We address Problem \ref{problem: 1} under the following assumption on the structure of the feedback.
\begin{assumption}[\textsc{linear feedback with subgaussian noise}] \label{Assumption: linear}
At all stages $t\in[T]$, the context vector, $\Psi_t$, and feedback vector, $y_t$, satisfy
\begin{equation}
    \left(y_{t}\right)_i = \theta_{*,i} \cdot \Psi_t + \eta_t, \quad \forall i\in[m],
\end{equation}
where $\theta_{*, i} \in \mathbb{R}^d$ is an unknown coefficient vector and for a fixed $\sigma \in\mathbb{R}_+$, $\eta_t$ is conditionally $\sigma$-subgaussian, \textit{i.e.},
\begin{equation}
    \mathbf{E}\left[\exp{(\alpha \eta_t)} \mid
    \Psi_1,\dots,\Psi_t, \eta_{1}, \dots, \eta_{t-1}\right]
    \leq \exp{\left(\sigma^2\alpha^2/2\right)}, \quad
    \forall \alpha \in \mathbb{R}.
\end{equation}
\end{assumption}
\begin{prob} \label{problem: 2}
    Characterize an upper bound on the cumulative maximal loss of the proposed bandit algorithm for blending controllers.
\end{prob}
\section{Theoretical contributions}\label{sec:theory}

We propose Algorithm~\ref{alg:1} for blending
controllers. 
The algorithm is a bandit algorithm that picks an arm based on an upper confidence bound (UCB) index that it computes for each arm. 
Based on the UCB indices, the algorithm estimates the maximal loss in the individual
objectives corresponding to each arm.
It then picks
the arm with the least estimated loss. 
We compute the UCB indices using regularized least squares to exploit the results in \cite{abbasi2011improved}, in which it is shown that the accuracy of such estimates increases as time progresses.
In this section, we first show that the algorithm achieves a sublinear Pareto regret as the proposed solution to Problem \ref{problem: 1}. We then state our solution to Problem \ref{problem: 2}, wherein we establish an upper bound on the cumulative maximal loss of the algorithm.

For all objectives $i\in[m]$, and let $\hat{\theta}_{t,i}$ be
the $\ell^2$-regularized least-squares estimate of the
unknown coefficient vector $\theta_{*,i}$ with a
user-defined regularizing parameter $\lambda > 0$, \textit{i.e.,}
\begin{equation}\label{eq:thetahat}
    \hat{\theta}_{t,i} := \left(\Psi_{1:t}^\top \Psi_{1:t}^{}+ \lambda I\right)^{-1} \Psi_{1:t}^\top Y_{1:t}^{i},
\end{equation}
where $\Psi_{1:t}$ is the matrix with rows equal to $\Psi_1^\top,\Psi_2^\top,\dots,\Psi_t^\top$ and $Y_{1:t}^{i} = [(y_1)_{i},\dots,(y_t)_{i}]^\top$.
In Algorithm \ref{alg:1}, we implement a memory-efficient incremental implementation of \eqref{eq:thetahat}, see Appendix \ref{appendix: incremental implementation}.
The following lemma shows that, for all objectives $i\in[m]$,
the estimated coefficient vector, $\hat\theta_{t,i}$, is
statistically close to its true value, $\theta_{*,i}$.
\begin{lemma} [Theorem 2 in \cite{abbasi2011improved}] \label{lemma:ConfidenceSet}
Let Assumption \ref{Assumption: linear} hold and $\{\Psi_t\}_{t=1}^{\infty}$ be an $\mathbb{R}^d$-valued stochastic process, where $\Psi_t := \psi(z_t,x_t)$. For any $t\geq 1$, define
\begin{equation*}
    V_t := V_0 + \sum\limits_{s=1}^t \Psi_{s}{\Psi_{s}}^\top,
\end{equation*}
where $V_0=\lambda I_{d\times d}$, and $\lambda > 0$.
Furthermore, let $S,L\in\mathbb{R}_+$ and assume that, for all objectives $i\in[m]$ and stages $t\in\mathbb{N}$,
$\|\theta_{*,i}\|_2 \leq S$ 
and $\|\Psi_t\|_2 \leq L$. Then, for any
$\delta>0$, with probability at least $1-\delta$, the true coefficient vector, $\theta_{*,i}$, lies within ellipsoid $C_t^i$
defined as
\begin{equation}\label{eq:ct}
    C_{t}^i := \left\{ \theta \in \mathbb{R}^d :
    \left\|\theta - \hat{\theta}_{t,i} \right\|_{V_{t}} \leq \beta_t :=
\sigma
\sqrt{d\log\left(\frac{1+tL^2/\lambda}{\delta}\right)} +
\lambda^{1/2} S\right\}.
\end{equation}
\end{lemma}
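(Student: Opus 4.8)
Since the statement is exactly the self-normalized confidence-ellipsoid bound of \cite{abbasi2011improved}, the plan is to reproduce the \emph{method of mixtures} argument for vector-valued martingales. Fix an objective $i \in [m]$ and write $\theta_* := \theta_{*,i}$, $\hat\theta_t := \hat\theta_{t,i}$, and $S_t := \sum_{s=1}^t \eta_s \Psi_s$. The first step is the algebraic reduction: substituting $Y_{1:t}^i = \Psi_{1:t}\theta_* + (\eta_1,\dots,\eta_t)^\top$ into \eqref{eq:thetahat} and using $\Psi_{1:t}^\top\Psi_{1:t} = V_t - \lambda I$ gives $\hat\theta_t - \theta_* = V_t^{-1}S_t - \lambda V_t^{-1}\theta_*$, so by the triangle inequality for $\|\cdot\|_{V_t}$ together with $\|\theta_*\|_{V_t^{-1}} \le \lambda^{-1/2}\|\theta_*\|_2 \le \lambda^{-1/2}S$ (which follows from $V_t \succeq \lambda I$),
\[
    \left\|\hat\theta_t - \theta_*\right\|_{V_t} \le \left\|S_t\right\|_{V_t^{-1}} + \lambda^{1/2}S .
\]
It therefore suffices to bound $\|S_t\|_{V_t^{-1}}$ uniformly over $t$.

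For that bound I would, for each fixed $\xi \in \mathbb{R}^d$, introduce the exponential process
\[
    M_t^\xi := \exp\!\left(\sum_{s=1}^t \left[\frac{\eta_s\,(\xi\cdot\Psi_s)}{\sigma} - \frac{(\xi\cdot\Psi_s)^2}{2}\right]\right),
\]
and verify, using the conditional $\sigma$-subgaussianity of $\eta_t$ from Assumption~\ref{Assumption: linear} applied with $\alpha = (\xi\cdot\Psi_t)/\sigma$, that each one-step multiplicative increment has conditional expectation at most $1$ with respect to the filtration $\mathcal{F}_t := \sigma(\Psi_1,\dots,\Psi_{t+1},\eta_1,\dots,\eta_t)$; hence $(M_t^\xi)_{t\ge 0}$ is a nonnegative supermartingale with $M_0^\xi = 1$. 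Mixing over $\xi$ against the centered Gaussian density with covariance $V_0^{-1} = \lambda^{-1}I$ produces another nonnegative supermartingale $\bar M_t := \int M_t^\xi\,dh(\xi)$ with $\mathbf{E}[\bar M_0] = 1$; completing the square in the Gaussian integral evaluates it in closed form to
\[
    \bar M_t = \left(\frac{\det V_0}{\det V_t}\right)^{1/2}\exp\!\left(\frac{\|S_t\|_{V_t^{-1}}^2}{2\sigma^2}\right).
\]
Applying Ville's maximal inequality, $\Pr[\sup_{t\ge 0}\bar M_t \ge 1/\delta] \le \delta$, and rearranging yields: with probability at least $1-\delta$, for all $t \ge 0$,
\[
    \|S_t\|_{V_t^{-1}}^2 \le 2\sigma^2\log\!\left(\frac{\det(V_t)^{1/2}\det(V_0)^{-1/2}}{\delta}\right).
\]
(Making the random-$t$ version rigorous requires the standard stopped-process argument, which I would include.)

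It remains to make the right-hand side explicit. Since $\|\Psi_s\|_2 \le L$, we have $\mathrm{trace}(V_t) \le d\lambda + tL^2$, so by AM--GM $\det(V_t) \le (\lambda + tL^2/d)^d$ and hence $\det(V_t)^{1/2}\det(V_0)^{-1/2} \le (1 + tL^2/(d\lambda))^{d/2} \le (1 + tL^2/\lambda)^{d/2}$. Substituting this determinant bound and simplifying gives $\|S_t\|_{V_t^{-1}} \le \sigma\sqrt{d\log\!\big((1+tL^2/\lambda)/\delta\big)}$, and combining with the first display yields $\|\hat\theta_{t,i} - \theta_{*,i}\|_{V_t} \le \beta_t$, i.e.\ $\theta_{*,i} \in C_t^i$; the argument is identical for each $i \in [m]$. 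The main obstacle — and the only nonroutine part — is the construction and supermartingale verification of the mixture process $\bar M_t$ together with the passage from a fixed-time bound to the time-uniform statement via the maximal inequality; everything else is the subgaussian moment bound and linear algebra.
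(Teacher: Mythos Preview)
Your proposal is correct and is precisely the method-of-mixtures argument of Abbasi-Yadkori, P\'al, and Szepesv\'ari. Note, however, that the paper does not prove this lemma at all: it is stated as ``Theorem~2 in \cite{abbasi2011improved}'' and invoked without proof, so there is no in-paper argument to compare against --- your write-up simply reproduces the proof from the cited source.
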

At each stage $t\in[T]$, we use confidence ellipsoids $C_{t}^i$ to compute the UCB
indices corresponding to every arm $x\in\mathcal{X}$ and
objective $i\in[m]$. In particular, the UCB index of arm $x$ for objective $i$ is
\begin{equation}\label{eq:UCB}
    \hat{\mu}^i_{t,x} := \max\limits_{\theta \in C_{t}^i} \theta \cdot \psi(z_t, x).
\end{equation}
Equation \eqref{eq:UCB} is the evaluation of the support
function of ellipsoid $C_{t}^i$ at vector $\psi(z_t,
x)$. Using the closed-form solution for the support function
of an ellipsoid~\cite{boyd2004convex}, we can write

\begin{equation}\label{eq:UCB closed-form}
    \hat{\mu}^i_{t,x} = \left(\hat\theta_{t,i} +
        \frac{\beta_t V_t^{-1} \psi(z_t,x)}{\left\|\psi(z_t,x)\right\|_{V_t^{-1}}}
   \right) \cdot \psi(z_t,x) = \hat\theta_{t,i}\cdot \psi(z_t,x) +
        \beta_t \left\|\psi(z_t,x)\right\|_{V_t^{-1}}.
\end{equation}
\begin{algorithm}[t] 
\DontPrintSemicolon
  \KwInput{A set of controllers $\mathcal{X}$, feature mapping $\psi: \mathcal{Z}\times \mathcal{X} \mapsto \mathbb{R}^d$, regularization parameter $\lambda \ge \max\{1,L^2\}$, time horizon $T$}
  \textbf{Initialize }$V_0 = \lambda I_{d\times d}, \forall i\in[m]: \hat{\theta}_{1,i} = 0_{d\times1}, W_{0,i} = 0_{d\times 1}, \mathcal{O}_0 = \mathcal{X}$.\;
  \textbf{for} $t=1$ \textbf{to} $T$ \textbf{do}:\;
  \hspace{0.2in} Pull an arm $x_t \in \mathcal{O}_t$ uniformly at random.\;
  \hspace{0.2in} Observe the state of the environment $z_t$ and feedback $y_t$, and let $\Psi_t := \psi(z_t,x_t)$.\;
  \hspace{0.2in} Update $V_{t} = V_{t-1} + \Psi_t {\Psi_t}^\top$.\;
  \hspace{0.2in} \textbf{for} $i=1$ \textbf{to} $m$ \textbf{do}:\;
  \hspace{0.4in} Update $W_{t,i} = W_{t-1,i} + (y_t)_i \Psi_t$.\;
  \hspace{0.4in} Compute $\hat{\theta}_{t,i} = V_{t}^{-1} W_{t,i}$.\;
  \hspace{0.4in} Compute $C_{t}^i$ by \eqref{eq:ct}.\;
  \hspace{0.2in} For each arm $x\in\mathcal{X}$, compute the UCB index, $\hat\mu_{t,x}$, by \eqref{eq:UCB closed-form}.\; 
      \hspace{0.2in} For each arm $x\in\mathcal{X}$, compute the estimated maximal loss, $\hat\epsilon_t$, by \eqref{eq:epsilon hat definition}.\;\label{step:arm}
  \hspace{0.2in} Update $\mathcal{O}_{t+1} = \mathop{\mathrm{argmin}}_{x\in\mathcal{X}} \hat \epsilon_t(x)$.
\caption{Blending controllers algorithm}\label{alg:1}
\end{algorithm}
Subsequently, Algorithm~\ref{alg:1} estimates the maximal
losses using the computed UCB indices as
\begin{equation} \label{eq:epsilon hat definition}
    \hat \epsilon_t(x) := \inf\left\{ \epsilon \in \mathbb{R}_+ \mid \hat \mu_{t,x} + \epsilon \succ \hat \mu_{t,x'}, \forall x'\in\mathcal{X}\right\}, \quad \forall x\in\mathcal{X},
\end{equation}
and picks the arm with the least estimated maximal loss. In Lemma \ref{lemma: epsilon and dominance}, we show that such a decision rule between the arms picks an arm from the set of non-dominated arms.

\begin{lemma}
    \label{lemma: epsilon and dominance}
    In Algorithm \ref{alg:1}, let $x_t\in\mathcal{X}$ denote the arm picked by the algorithm at stage $t\in[T]$. Then, the UCB index corresponding to arm $x_t$ is not Pareto dominated by any other arm's UCB index.
\begin{proof}
See Appendix \ref{appendix: proof of lemma 2}
\end{proof}
\end{lemma}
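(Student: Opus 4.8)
The plan is to argue by contradiction. Suppose the UCB index $\hat\mu_{t,x_t}=(\hat\mu^1_{t,x_t},\dots,\hat\mu^m_{t,x_t})$ of the arm $x_t$ picked by Algorithm~\ref{alg:1} at stage $t$ were Pareto dominated by $\hat\mu_{t,x'}$ for some $x'\in\mathcal{X}$. I would show that this forces $\hat\epsilon_t(x')<\hat\epsilon_t(x_t)$, contradicting that $x_t$ minimizes $\hat\epsilon_t$ over $\mathcal{X}$ (the selection step of Algorithm~\ref{alg:1}).

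The first step is to record a closed form for the estimated maximal loss. Expanding $\hat\mu_{t,x}+\epsilon\succ\hat\mu_{t,x'}$ in \eqref{eq:epsilon hat definition} coordinatewise, the constraint imposed by a fixed $x'$ reduces to $\epsilon\ge\max_{i\in[m]}\bigl(\hat\mu^i_{t,x'}-\hat\mu^i_{t,x}\bigr)$ (with an inconsequential strictness adjustment in the degenerate case where all coordinates of $\hat\mu_{t,x'}-\hat\mu_{t,x}$ coincide). Intersecting over all $x'\in\mathcal{X}$ and noting that $x'=x$ contributes the value $0$ yields
\[
\hat\epsilon_t(x)=\max_{x'\in\mathcal{X}}\max_{i\in[m]}\bigl(\hat\mu^i_{t,x'}-\hat\mu^i_{t,x}\bigr)\ \ge\ 0 .
\]
The consequence I need is anti-monotonicity: if $\hat\mu^i_{t,a}\ge\hat\mu^i_{t,b}$ for every $i\in[m]$, then every term in the double maximum for $\hat\epsilon_t(a)$ is no larger than the matching term for $\hat\epsilon_t(b)$, so $\hat\epsilon_t(a)\le\hat\epsilon_t(b)$.

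Applying this with $a=x'$ and $b=x_t$, the hypothesis $\hat\mu_{t,x'}\succ\hat\mu_{t,x_t}$ immediately gives $\hat\epsilon_t(x')\le\hat\epsilon_t(x_t)$. To upgrade this to a strict inequality I would use the coordinate $j$ witnessing $\hat\mu^j_{t,x'}>\hat\mu^j_{t,x_t}$ together with a maximizer $(x^\star,i^\star)$ of the double maximum defining $\hat\epsilon_t(x_t)$: the term attaining $\hat\epsilon_t(x_t)$ drops strictly when $x_t$ is replaced by $x'$ exactly when $i^\star$ is a coordinate on which $x'$ strictly improves on $x_t$, and then $\hat\epsilon_t(x')<\hat\epsilon_t(x_t)$. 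When $\mathcal{X}$ contains only two arms (the basic safe/performant pair), this is immediate, since then $\hat\epsilon_t(x')=\max\{0,\max_{i\in[m]}(\hat\mu^i_{t,x_t}-\hat\mu^i_{t,x'})\}=0$ while $\hat\epsilon_t(x_t)\ge\hat\mu^j_{t,x'}-\hat\mu^j_{t,x_t}>0$.

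The main obstacle is securing this strict inequality once there are three or more arms, because then a maximizer of $\hat\epsilon_t(x_t)$ may sit on a coordinate where $x'$ and $x_t$ agree: e.g.\ three arms with UCB indices $(0,0),(1,0),(0,1)$ all have $\hat\epsilon_t=1$, so a uniform draw from $\mathcal{O}_{t+1}$ could return the Pareto-dominated arm $(0,0)$. I would close this gap in one of two ways: (i) invoke non-degeneracy of the regularized least-squares UCB indices under Assumption~\ref{Assumption: linear}, so that exact ties among the $\hat\mu_{t,x}$, and hence among the $\hat\epsilon_t(x)$, occur with probability zero; or (ii) observe that $\mathcal{O}_{t+1}=\mathop{\mathrm{argmin}}_{x}\hat\epsilon_t(x)$ is upward closed under $\succ$ within $\mathcal{X}$ --- if $x\in\mathcal{O}_{t+1}$ and $x''\succ x$, anti-monotonicity gives $\hat\epsilon_t(x'')\le\hat\epsilon_t(x)$, whence $x''\in\mathcal{O}_{t+1}$ --- and restrict the uniform draw to the $\succ$-maximal elements of this finite nonempty set, which are non-dominated by construction. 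Either route yields the lemma; the rest is the coordinatewise bookkeeping above.
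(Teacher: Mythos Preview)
Your argument follows the same contradiction strategy as the paper, but you are considerably more careful. The paper's entire proof is two sentences: it assumes a dominating arm $x'\in\mathcal{X}\setminus\operatorname*{argmin}_{x}\hat\epsilon_t(x)$, asserts ``by the definition of $\hat\epsilon$'' that $\hat\epsilon_t(x')=0$, and concludes $x'\in\operatorname*{argmin}_{x}\hat\epsilon_t(x)$, a contradiction. Note two things. First, the paper's claim $\hat\epsilon_t(x')=0$ is only literally correct when $|\mathcal{X}|=2$ (the safe/performant pair); for larger arm sets it is your anti-monotonicity argument, not the bare definition, that yields $\hat\epsilon_t(x')\le\hat\epsilon_t(x_t)$ and hence $x'\in\operatorname*{argmin}$. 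Second, by restricting the hypothetical $x'$ to arms \emph{outside} the argmin set, the paper never handles the case you flag: a dominating arm that already lies in $\mathcal{O}_{t+1}$.

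Your three-arm example $(0,0),(1,0),(0,1)$ is a genuine counterexample to the lemma as stated for $|\mathcal{X}|\ge 3$, and the paper simply does not address it. Your proposed repairs (generic non-degeneracy of the UCB vectors, or tie-breaking within $\mathcal{O}_{t+1}$ toward $\succ$-maximal elements) are outside what the paper does; the paper's downstream use of the lemma (Theorem~\ref{theorem: Sublinear Pareto regret}) is unaffected since a dominated minimizer has Pareto suboptimality gap zero anyway, but the lemma itself needs one of your fixes to hold verbatim beyond two arms. In short: your route is the same idea done more carefully, and you have correctly identified a gap the paper leaves open.
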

We now state the main theorem of this paper, which alongside the algorithm itself, solves Problem \ref{problem: 1}.
\begin{theorem}[\textsc{Sublinear Pareto regret}] \label{theorem: Sublinear Pareto regret}
Let Assumption \ref{Assumption: linear} hold and $S,L\in\mathbb{R}_+$. Assume that, for all objectives $i\in[m]$ and all stages $t\in[T]$, $\|\theta_{*,i}\|_2 \leq S$ and $\|\Psi_t\|_2 \leq L$.
Then, with high probability $1-\delta$, Algorithm \ref{alg:1} satisfies
\begin{equation*}
    \mathrm{PR}(h_T) \leq \mathcal{O}\left(\sqrt{T}\log(T)\right),
\end{equation*}
where $h_T$ is the history of states, actions and feedback vectors over the learning horizon $T$.
\begin{proof}
See Appendix \ref{appendix: sublinear pareto regret}.
\end{proof}
\end{theorem}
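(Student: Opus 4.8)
The plan is to control the instantaneous Pareto suboptimality gap $\Delta_t(x_t)$ of the arm played at each stage by the width of its confidence interval, and then sum these widths using the standard elliptical-potential (determinant--trace) machinery for linear bandits. To begin, condition on the event that the conclusion of Lemma~\ref{lemma:ConfidenceSet} holds simultaneously for every objective $i\in[m]$ and every stage $t$; by a union bound over the $m$ objectives (applying the lemma with confidence $\delta/m$ each) this event has probability at least $1-\delta$, at the cost of an extra $\log m$ inside $\beta_t$. On this event, the closed-form expression \eqref{eq:UCB closed-form} yields, for every arm $x$, stage $t$, and objective $i$, the sandwich
\[
  \mu_{t,x}^i \;\le\; \hat{\mu}_{t,x}^i \;\le\; \mu_{t,x}^i + 2\beta_t\,\|\psi(z_t,x)\|_{V_t^{-1}},
\]
i.e.\ the UCB index is optimistic and overshoots the true mean by at most $2\beta_t\|\psi(z_t,x)\|_{V_t^{-1}}$.

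The crux of the argument is the claim that $\Delta_t(x_t)\le g_t := 2\beta_t\|\Psi_t\|_{V_t^{-1}}$. Suppose not; by Definition~\ref{def: PSG and maximal loss} there is then an arm $x'$ with $\mu_{t,x'}\succ \mu_{t,x_t}+g_t\mathbf{1}$, where $\mathbf{1}\in\mathbb{R}^m$ is the all-ones vector. Combining this with the sandwich above, for every coordinate $i$ we get $\hat{\mu}_{t,x'}^i \ge \mu_{t,x'}^i \ge \mu_{t,x_t}^i + g_t \ge \hat{\mu}_{t,x_t}^i$, with strict inequality in at least the coordinate witnessing the Pareto domination. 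Hence $\hat{\mu}_{t,x'}\succ \hat{\mu}_{t,x_t}$, contradicting Lemma~\ref{lemma: epsilon and dominance}, which guarantees that the UCB index of the played arm is not Pareto dominated by any other arm's UCB index. Therefore $\Delta_t(x_t)\le g_t$ for all $t$, and summing gives $\mathrm{PR}(h_T)\le 2\sum_{t\in[T]}\beta_t\|\Psi_t\|_{V_t^{-1}}$.

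It remains to bound this sum. Since $\beta_t$ is nondecreasing in $t$, $\beta_t\le\beta_T = \mathcal{O}(\sqrt{d\log T})$. By Cauchy--Schwarz, $\sum_{t\in[T]}\|\Psi_t\|_{V_t^{-1}} \le \sqrt{T\sum_{t\in[T]}\|\Psi_t\|_{V_t^{-1}}^2}$. Because the algorithm uses $\lambda\ge\max\{1,L^2\}$, we have $\|\Psi_t\|_{V_t^{-1}}^2 \le L^2/\lambda \le 1$, so the elliptical potential lemma of \cite{abbasi2011improved} applies and gives $\sum_{t\in[T]}\|\Psi_t\|_{V_t^{-1}}^2 \le 2\log\frac{\det V_T}{\det V_0} \le 2d\log\!\big(1+\tfrac{TL^2}{\lambda d}\big) = \mathcal{O}(d\log T)$. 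Substituting,
\[
  \mathrm{PR}(h_T) \;\le\; 2\beta_T\sqrt{T\cdot \mathcal{O}(d\log T)} \;=\; \mathcal{O}\!\big(d\sqrt{T}\log T\big) \;=\; \mathcal{O}\!\big(\sqrt{T}\log T\big),
\]
treating the feature dimension $d$ as a constant, which is the claimed bound.

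The main obstacle is the second step: translating the purely combinatorial ``non-domination of UCB indices'' guarantee of Lemma~\ref{lemma: epsilon and dominance} into a scalar bound on the \emph{true} Pareto suboptimality gap, taking care that in Definition~\ref{def: PSG and maximal loss} the scalar $\epsilon$ is added to all $m$ coordinates at once and that optimism is exploited in the correct direction on both $x_t$ and the hypothetical dominating arm $x'$. Everything after that is the textbook linear-bandit analysis. One minor bookkeeping point is the one-stage index offset in Algorithm~\ref{alg:1} (the arm at stage $t$ is selected from $\mathcal{O}_t$, built from estimates of stage $t-1$), which merely replaces $\beta_t,V_t$ by $\beta_{t-1},V_{t-1}$ in the bound on $\Delta_t(x_t)$ and, since $\|\Psi_t\|_{V_{t-1}^{-1}}\ge\|\Psi_t\|_{V_t^{-1}}$ and the potential lemma also holds in the form $\sum_t\|\Psi_t\|_{V_{t-1}^{-1}}^2\le 2\log(\det V_T/\det V_0)$, does not change the $\mathcal{O}(\sqrt{T}\log T)$ rate.
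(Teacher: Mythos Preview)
Your proposal is correct and follows essentially the same route as the paper's proof: both use Lemma~\ref{lemma: epsilon and dominance} together with optimism of the UCB indices to bound the instantaneous Pareto suboptimality gap by $2\beta_t\|\Psi_t\|_{V_t^{-1}}$, then finish with Cauchy--Schwarz and the elliptical-potential lemma from~\cite{abbasi2011improved}. Your presentation differs only cosmetically---you phrase the key step as a contradiction (``if $\Delta_t(x_t)>g_t$ then $\hat\mu_{t,x'}\succ\hat\mu_{t,x_t}$'') whereas the paper argues directly (``for every $x$ there is a coordinate $j$ with $\hat\mu_{t,x_t}^j\ge\hat\mu_{t,x}^j$, and in that coordinate the true gap is at most the confidence width'')---and you are slightly more careful about the union bound over the $m$ objectives and the one-stage index offset, both of which the paper glosses over.
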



We now establish an upper bound on the average cumulative maximal loss
of the algorithm during
execution.
We use the upper bound as an additional performance measure to Pareto regret in order to bridge the gap between the multi-objective performance and performance in individual objectives. In the following theorem, we state our solution to Problem \ref{problem: 2}.
\begin{theorem}\label{theorem:PRD}
Assume the same settings as in Theorem \ref{theorem:
Sublinear Pareto regret}. Then, for any $T\in\mathbb{N}$, the average cumulative maximal
loss of Algorithm \ref{alg:1} satisfies
\begin{equation} \label{eq:average loss UB}
    0 \le \frac{1}{T} \mathrm{CML}(h_T)   \le \frac{1}{T} \sum\limits_{t=1}^{T} \hat \epsilon_t(x_t) + \mathcal{O}(1/\sqrt{T}),
\end{equation}
where $h_T$ is the history of states, actions and feedback vectors until stage $T$, and $x_t$ is the chosen arm at stage $t\in[T]$.
\end{theorem}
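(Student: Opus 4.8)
\emph{Proof plan.}~The lower bound is immediate and deterministic: since $\|\theta_{*,i}\|_2\le S$ and $\|\Psi_t\|_2\le L$, every coordinate of $\mu_{t,x}$ satisfies $|(\mu_{t,x})_i|=|\theta_{*,i}\cdot\psi(z_t,x)|\le SL$, so the feasible set in \eqref{eq:epsilon} is nonempty (any $\epsilon>2SL$ makes $\mu_{t,x}+\epsilon$ strictly dominate every $\mu_{t,x'}$) and therefore $\epsilon_t(x_t)\in\mathbb{R}_+$ is finite, whence $\mathrm{CML}(h_T)\ge 0$. The content of the theorem is the upper bound, and the plan is to establish a \emph{pointwise} inequality $\epsilon_t(x)\le\hat\epsilon_t(x)+2\beta_t\|\psi(z_t,x)\|_{V_t^{-1}}$ valid for every arm $x$ on the high-probability event of Lemma~\ref{lemma:ConfidenceSet}, instantiate it at $x=x_t$, sum over $t$, and control the resulting confidence-width sum exactly as in the proof of Theorem~\ref{theorem: Sublinear Pareto regret}.

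The first step is to rewrite $\epsilon_t$ and $\hat\epsilon_t$ in a form amenable to the linear estimates. Unfolding Definition~\ref{def: PSG and maximal loss} and \eqref{eq:epsilon hat definition}, and observing that the constraint indexed by $x'=x$ forces the infimum to be over strictly positive $\epsilon$ (no vector Pareto dominates itself), one obtains the closed forms
\begin{gather*}
    \epsilon_t(x)=\max_{x'\in\mathcal{X}}\ \max_{i\in[m]}\bigl((\mu_{t,x'})_i-(\mu_{t,x})_i\bigr),\\
    \hat\epsilon_t(x)=\max_{x'\in\mathcal{X}}\ \max_{i\in[m]}\bigl(\hat\mu^i_{t,x'}-\hat\mu^i_{t,x}\bigr),
\end{gather*}
in which the term $x'=x$ contributes $0$, so both quantities are nonnegative even though the attaining $\epsilon$ need not itself be feasible (strict domination can fail on a coordinate tie) --- it is always the infimum of the feasible set. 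This edge-case bookkeeping in the Pareto relation is the step I expect to require the most care.

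Next I would record both inequalities that the confidence ellipsoids provide. On the event $\mathcal{E}$ of Lemma~\ref{lemma:ConfidenceSet}, which holds with probability at least $1-\delta$ and on which $\theta_{*,i}\in C_t^i$ for all $t\ge1$ and $i\in[m]$: (i) by definition of the UCB index as the support function \eqref{eq:UCB}, $(\mu_{t,x})_i=\theta_{*,i}\cdot\psi(z_t,x)\le\hat\mu^i_{t,x}$ for every arm; and (ii) by Cauchy--Schwarz in the $V_t$-norm, $\lvert(\theta_{*,i}-\hat\theta_{t,i})\cdot\psi(z_t,x)\rvert\le\|\theta_{*,i}-\hat\theta_{t,i}\|_{V_t}\,\|\psi(z_t,x)\|_{V_t^{-1}}\le\beta_t\|\psi(z_t,x)\|_{V_t^{-1}}$, which combined with the closed form \eqref{eq:UCB closed-form} gives $(\mu_{t,x})_i\ge\hat\mu^i_{t,x}-2\beta_t\|\psi(z_t,x)\|_{V_t^{-1}}$. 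The asymmetry is worth noting: for the competitor arms it suffices to use optimism (i), whereas the lower bound (ii) is only needed for the played arm, so no width term attaches to competitors. Substituting into the closed forms above, for any $x'$ and $i$,
\begin{equation*}
    (\mu_{t,x'})_i-(\mu_{t,x_t})_i\le\hat\mu^i_{t,x'}-\hat\mu^i_{t,x_t}+2\beta_t\|\Psi_t\|_{V_t^{-1}},
\end{equation*}
and maximizing over $x'$ and $i$ yields $\epsilon_t(x_t)\le\hat\epsilon_t(x_t)+2\beta_t\|\Psi_t\|_{V_t^{-1}}$.

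Finally, summing over $t\in[T]$ and dividing by $T$ reduces the claim to $\tfrac{1}{T}\sum_{t=1}^T\beta_t\|\Psi_t\|_{V_t^{-1}}=\mathcal{O}(1/\sqrt T)$ (up to the same logarithmic factors retained in the regret bound). This is the standard confidence-width estimate: $\beta_t$ is nondecreasing, hence $\beta_t\le\beta_T=\mathcal{O}(\sqrt{\log T})$; Cauchy--Schwarz gives $\sum_t\|\Psi_t\|_{V_t^{-1}}\le\sqrt{T\sum_t\|\Psi_t\|^2_{V_t^{-1}}}$; and since the algorithm uses $\lambda\ge\max\{1,L^2\}$ we have $\|\Psi_t\|^2_{V_t^{-1}}\le\|\Psi_t\|^2_{V_{t-1}^{-1}}\le1$, so the elliptical-potential argument of \cite{abbasi2011improved} yields $\sum_{t=1}^T\|\Psi_t\|^2_{V_t^{-1}}\le2\log(\det V_T/\det V_0)=\mathcal{O}(d\log T)$. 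Combining these, $\tfrac{2}{T}\sum_t\beta_t\|\Psi_t\|_{V_t^{-1}}=\mathcal{O}\bigl(\sqrt{d}\,\log T/\sqrt T\bigr)$, which is the asserted $\mathcal{O}(1/\sqrt T)$ term, completing the upper bound on $\mathcal{E}$.
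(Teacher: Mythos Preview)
Your proposal is correct and follows essentially the same approach as the paper's proof: both rewrite $\epsilon_t$ and $\hat\epsilon_t$ as a $\max$ over arms and objectives, use optimism $(\mu_{t,x'})_i\le\hat\mu^i_{t,x'}$ for the competitor and the confidence-width bound $\hat\mu^i_{t,x_t}-(\mu_{t,x_t})_i\le 2\beta_t\|\Psi_t\|_{V_t^{-1}}$ for the played arm to obtain the pointwise inequality $\epsilon_t(x_t)\le\hat\epsilon_t(x_t)+2\beta_t\|\Psi_t\|_{V_t^{-1}}$, and then sum and apply the elliptical-potential lemma exactly as in Theorem~\ref{theorem: Sublinear Pareto regret}. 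Your presentation is slightly cleaner in that you observe the inequality holds for any arm $x$ (the paper mentions $x_t\in\arg\min_x\hat\epsilon_t(x)$ but does not actually use minimality in the bound), and you derive the width bound directly from the closed form \eqref{eq:UCB closed-form} rather than via the maximizer $\tilde\theta_{t,i}$, but these are cosmetic differences.
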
 
\begin{proof}
See Appendix \ref{appendix: theorem 2}.
\end{proof}
Observe that the upper bound in \eqref{eq:average loss UB}
depends on the computed estimated maximal loss values and a diminishing term of the order $\mathcal{O}(1/\sqrt{T})$. Thus, the proposed
upper bound does not impose any additional computational
complexity in computation.

\section{Experiments}
\label{sec:numerical-examples}
\begin{figure}[t]
    \centering
    \begin{subfigure}[t]{0.26\linewidth}
        \centering
        \includegraphics[width = \linewidth]{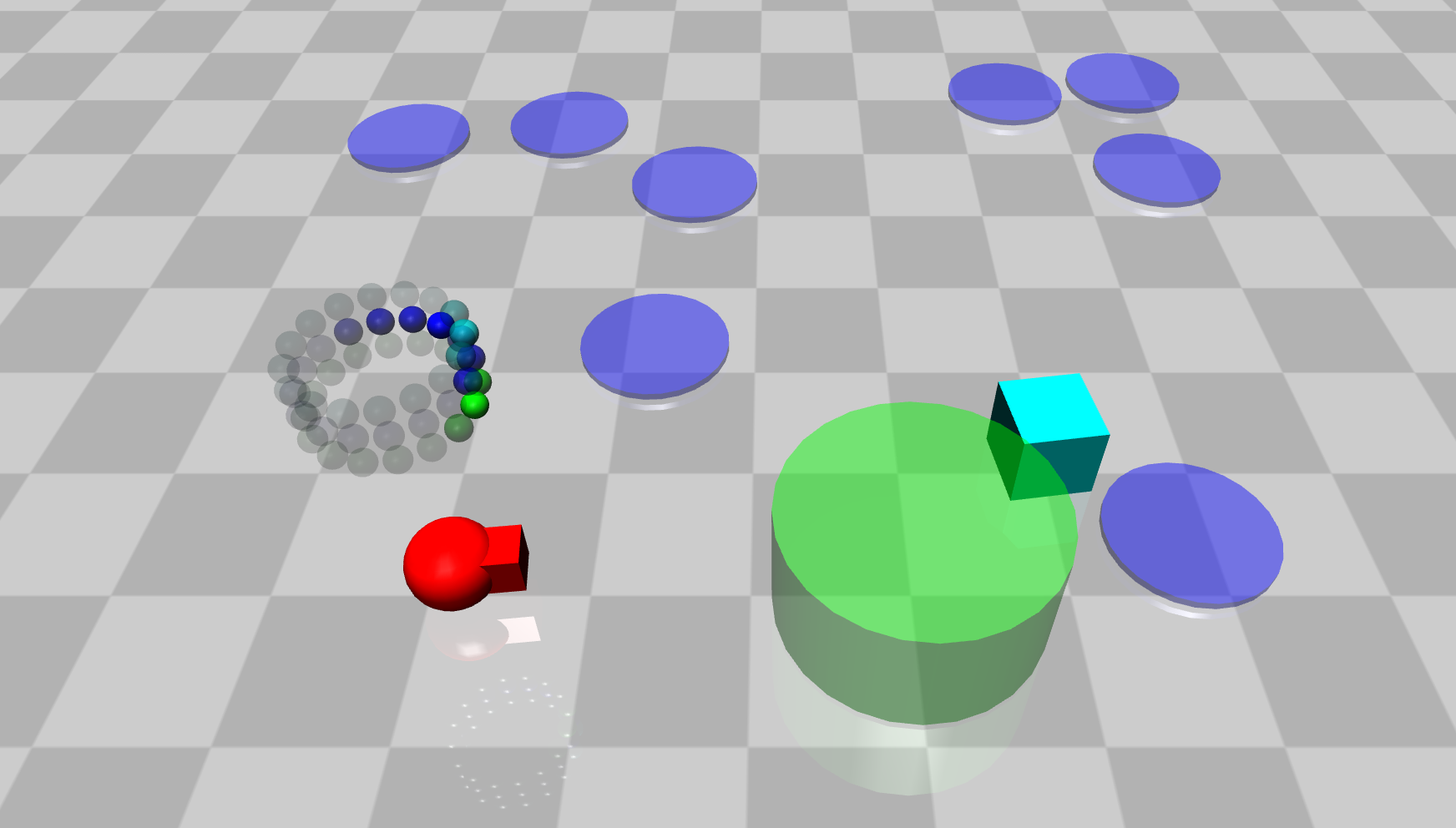}
        \caption{\emph{Point-Goal} }\label{fig:pointgoal}
    \end{subfigure}
    \quad
    \begin{subfigure}[t]{0.26\linewidth}
        \centering
        \includegraphics[width=\linewidth]{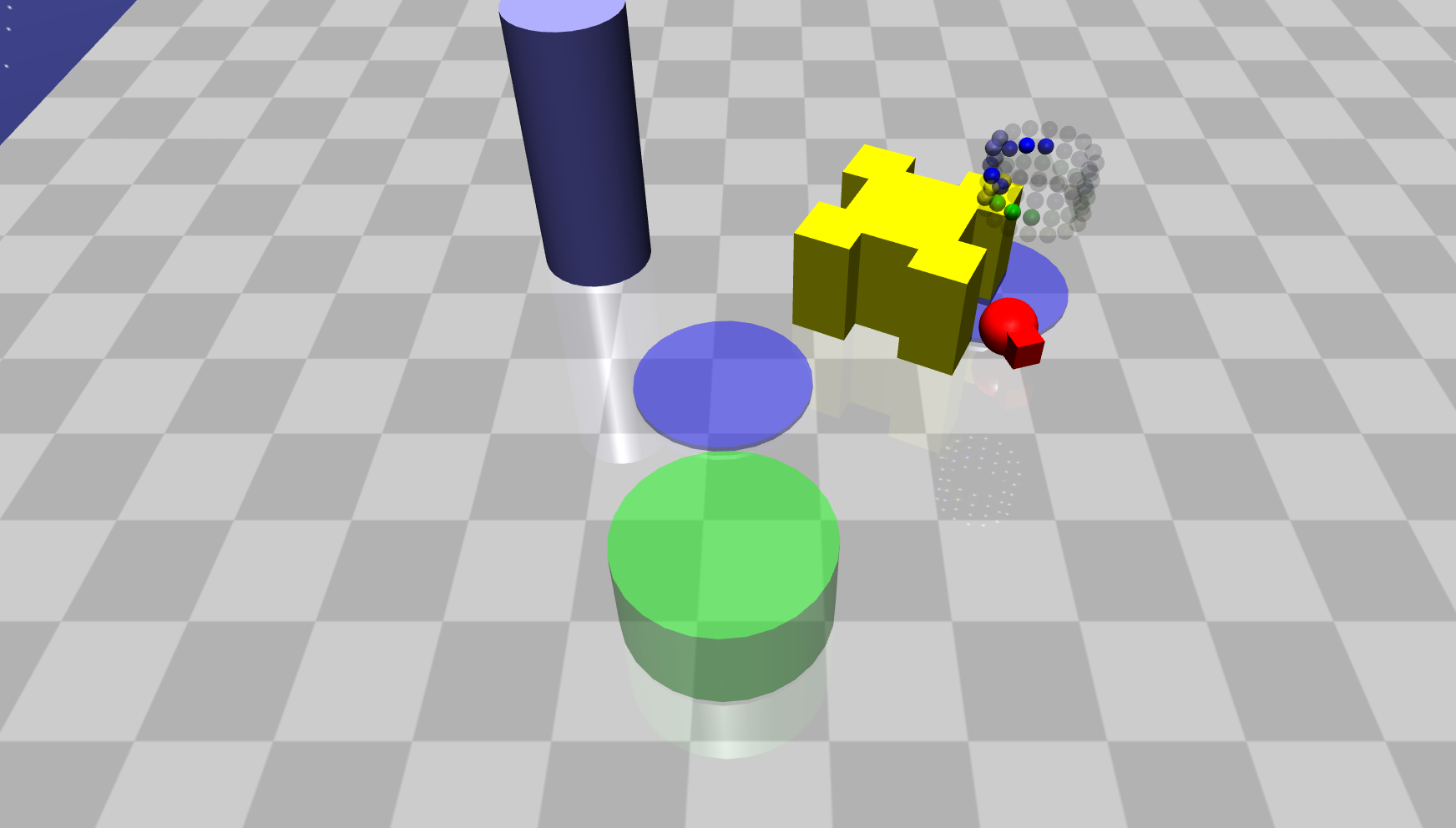}
        \caption{\emph{Point-Push}}\label{fig:pointpush}
    \end{subfigure}
    \quad 
    \begin{subfigure}[t]{0.26\linewidth}
        \centering
        \includegraphics[width=\linewidth]{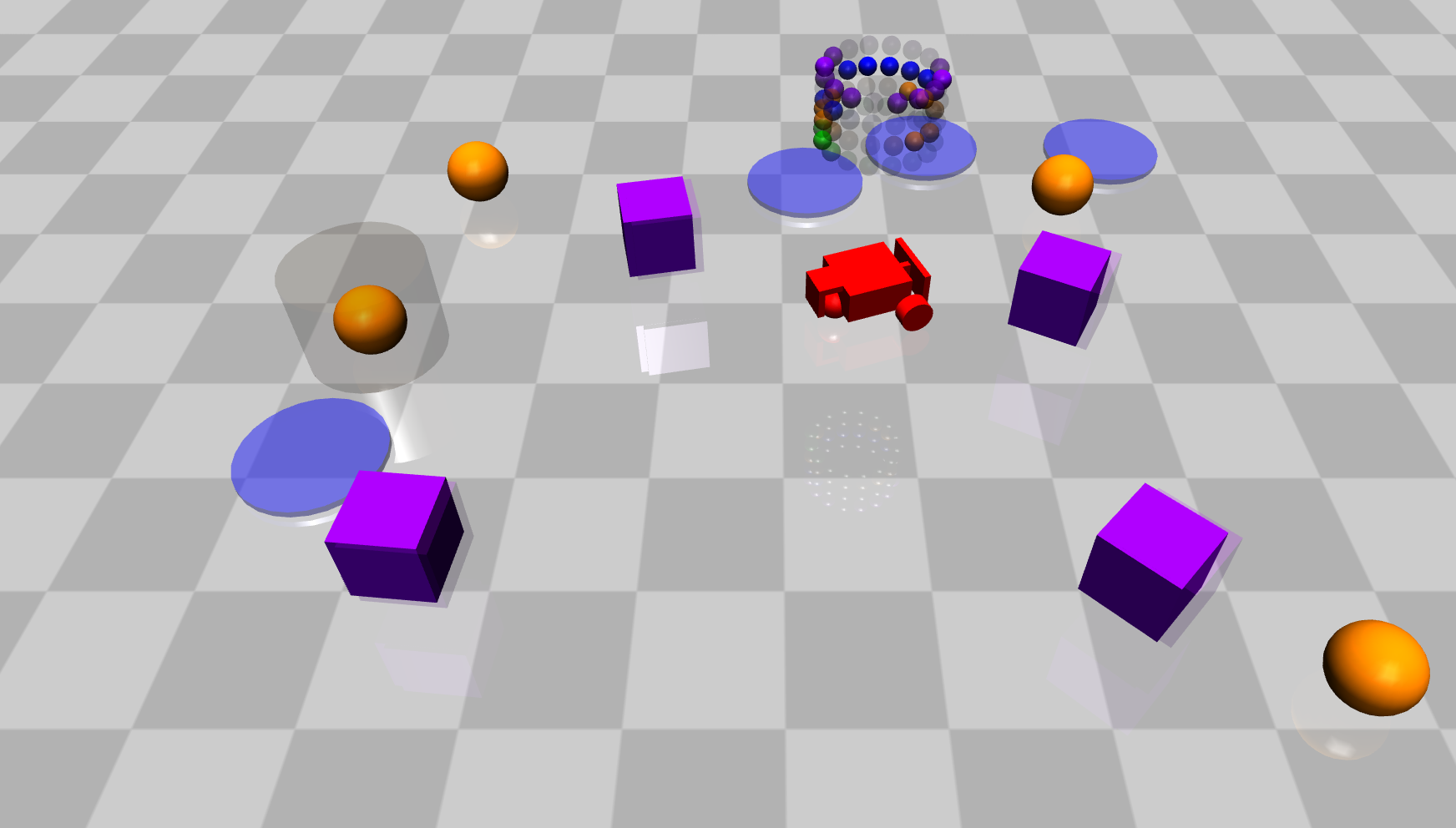}
        \caption{\emph{Car-Button}}\label{fig:carbutton}
    \end{subfigure}
    \caption{The selected Safety Gym environments for the experiments in Section \ref{sec:numerical-examples}.}
    \label{fig:safety-gym-envs}
\end{figure}
In this section, we use the Safety Gym suite \cite{raybenchmarking} to demonstrate the proposed algorithm's effectiveness in blending controllers in practice.
We find Safety Gym a suitable testbed because (i) state-of-the-art reinforcement learning algorithms that respect safety have been benchmarked in all of its varied environments, and (ii) similar to this paper, it uses the framework of auxiliary cost functions to enforce safety requirements. In our experiments, we chose the Point-Goal, Point-Push, and Car-Button Safety Gym environments, with each environment described as follows:
\begin{itemize}
    \item \emph{Point-Goal.} A robot with two actuators, one that sets the thrust and the other sets the angle, has to reach the green zone in Figure~\ref{fig:pointgoal}, while staying clear from the dangerous areas highlighted as blue circles. The robot itself is depicted in red.
    \item \emph{Point-Push.} The same robot has to navigate the yellow box in Figure~\ref{fig:pointpush} to the green zone. In addition to the safety specifications in the Point-Goal environment, the agent has to avoid the erected pillars in the environment.
    \item \emph{Car-Button.} A car with two independently actuated front wheels and a free-rolling rear wheel has to press the orange button that is highlighted as in Figure~\ref{fig:carbutton}. The agent has to avoid the purple moving boxes as a more sophisticated safety requirement.
    
\end{itemize}

We now describe and justify our choices of $\mathcal{X}$, the set of input controllers, $\psi$, the feature mapping, and the hyper parameters, $\lambda$, $L$, $S$ and $\sigma$ that are required by Algorithm \ref{alg:1}.

\textbf{Input controllers $\mathbf{\mathcal{X}}$.} For each environment, we use the Safety Gym benchmark suite~\cite{raybenchmarking} to identify the suitable choice of the safe and the performant controllers.
We use reinforcement learning algorithms such as trust region policy optimization (TRPO)~\cite{schulman2015trust} and proximal policy optimization (PPO)~\cite{schulman2017proximal} to generate the performant controller. According to the above benchmarking, these controllers show a superior performance in terms of their total reward; however, they perform poorly in terms of safety.
We generate the safe controller using constrained reinforcement learning algorithms such as constrained policy optimization (CPO)~\cite{achiam2017constrained} or a combination of Lagrangian methods with PPO and TRPO. In contrast to TRPO and PPO, these algorithms have been shown to meet their safety requirements in the sense that their average cost is below a fixed threshold; however, they are conservative in performance.


\captionsetup[figure]{font=normal,skip=0pt}
\begin{figure*}[t]
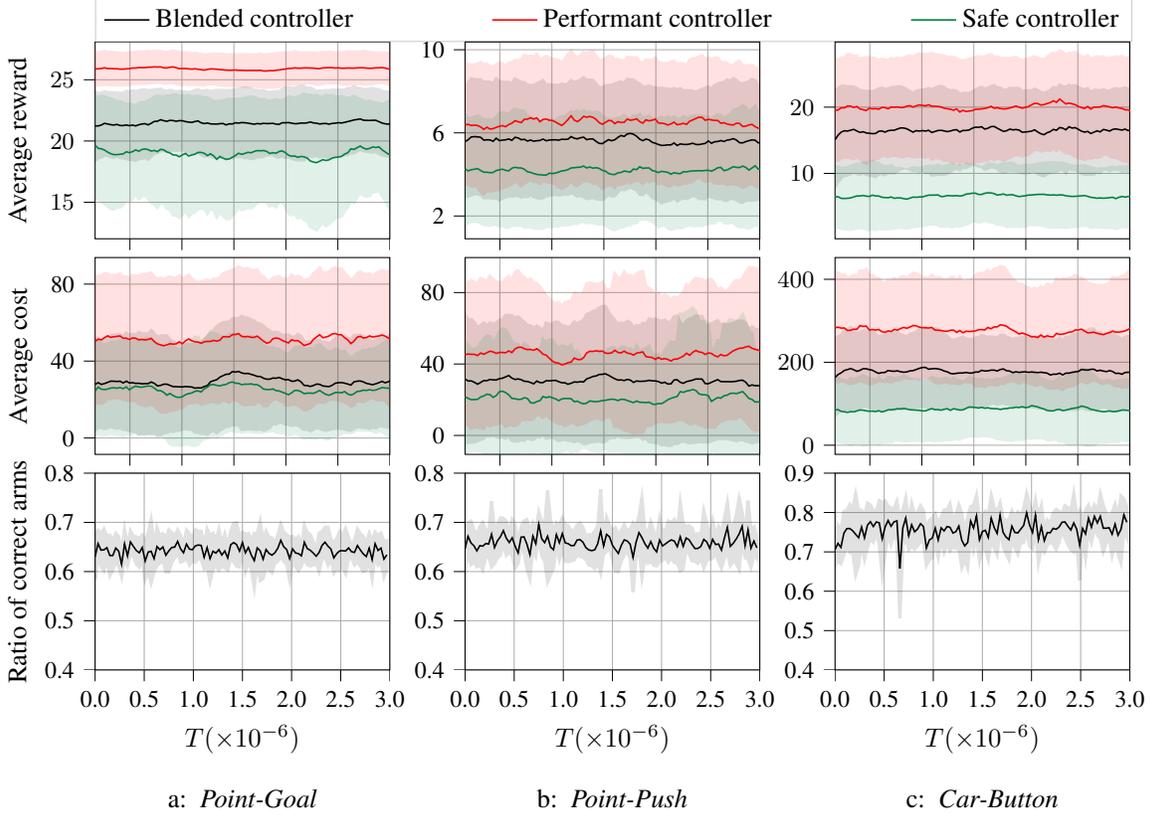

    \centering
    \hspace*{-1cm}
\pgfplotsset{every tick label/.append style={font=\small}}


    \renewcommand\thesubfigure{(\alph{subfigure})}
    \caption{Simulation results of testing Algorithm 1 in the selected Safety Gym environments in Section \ref{sec:numerical-examples}. The safe controllers used in a, b, and c are derived from PPO-Lagrangian, CPO, and TRPO-Lagrangian algorithms, respectively. The performant controllers are learned by PPO in a and TRPO in both b and c. The confidence clouds correspond to the standard deviation of each data point.}
    
    \label{fig:metrics-blended}
\end{figure*}

\textbf{Feature mapping $\mathbf{\psi}$.} The above choices of the safe and the performant controllers belong to the family of deep reinforcement learning algorithms~\cite{mnih2015human}, which use a neural network to estimate the action values at each state of the environment. We incorporate the underlying neural networks as the feature mapping to generate the context at each state. Precisely, each controller uses a separate feedforward multilayer perceptron network of size $(256,256)$ with $\tanh$ activation functions. In practice, the trained neural networks provide accurate approximations of the next-step reward and cost, and therefore, we consider that Assumption \ref{Assumption: linear} holds.

\textbf{Hyperparameters.} Based on the observed values of the output layer of the neural networks, we set $L$, the upper bound on the 2-norm of the feature vector, to 1, and we subsequently set $\lambda = 1$. Since the neural networks directly estimate the reward and cost, we expect the true coefficient vector to be at the vicinity of either $[1 \quad 0]^\top$ or $[0 \quad 1]^\top$; therefore, we set $S$, the upper bound on the 2-norm of the true coefficient vector $\theta_{*}$, to $1.5$. We choose a small value for $\sigma$ because we expect the variance of the mismatch noise to be small. In particular, we set $\sigma = 0.1$.


We visualize the results of each experiment with plots as depicted in Figure \ref{fig:metrics-blended}. The first row compares the average reward corresponding to the performant, the safe, and the blended controller. In the second row, we compare the blended controller's safety with its safe and performant counterparts by comparing their average cost. Finally, in the third row we evaluate the rate at which the algorithm employs the correct controller, \textit{i.e.,} it successfully avoids the controller with Pareto dominated reward and cost feedback. In order to find such a metric, at each decision step, we fix the environment behavior and separately employ each controller to reveal their true reward and cost. Then, we are able to establish the Pareto dominance relationship amongst them. Each data point in Figure \ref{fig:metrics-blended} represents an average of the metrics of 30 episodes, with each episode length fixed at 1,000 iterations.


As desired, the numerical results suggest that the proposed algorithm for blending controllers is an effective method of finding a controller that improves the safety of a given performant controller and the performance of a given safe controller. An analysis of the standard deviation across batches of 30,000 iterations supports the claim that the average cost and reward of the blended controller lie in-between the given performant and safe controllers. Additionally, the ratio at which the correct controller is picked is always above $0.5$, which indicates that compared to naively switching between the safe and the performant controller at random, Algorithm \ref{alg:1} performs significantly better.

We used a desktop computer with Intel Core $i9$-$9900$, \SI{3.10}{GHz} $\times 16$ processors and \SI{32}{Gb} of RAM for the experiments. Each iteration of the \emph{Point-Goal}, \emph{Point-Push}, and \emph{Car-Button} environments took \SI{2}{ms}, \SI{3}{ms}, and \SI{5}{ms}, respectively. See \cite{github} for all codes and datasets used in this section.

\section{Related Work}
\label{sec:related_work}

We review existing work on constrained reinforcement
learning algorithms that can enforce safety. 
We share the use of auxiliary costs to model safety with
constrained reinforcement learning algorithms, such as
CPO and PPO-Lagrangian. Although these
algorithms show promise in the
\textit{Safety Gym} benchmark suite, even
slight changes to safety specifications require learning a
new policy from scratch. In contrast, the proposed modular approach of blending controllers allows the decision-maker to simply blend the outdated policy with a new policy that satisfies the new constraint.

In \cite{verma2019imitation}, the authors propose a \emph{programmatic} reinforcement learning algorithm that shares the same modular methodology as employed in this paper.
Although the proposed algorithm improves upon the performance of the input controllers, it does not consider safety as an objective, which is the main concern of this paper.
In another approach, the \textit{simplex architecture}~\cite{sha2001using,bak2014real, lee2005dependable, bak2011sandboxing}, a decision logic for controlling a plant is developed using a safe and a performant controller.
In these works, the decision logic is computed by either a Lyapunov function or a reachability analysis.
These approaches require prior knowledge of the environment dynamics, whereas we do not assume any such prior knowledge.


The work in \cite{cheng2019control} introduces a regularization scheme for a policy gradient reinforcement learning algorithm, such that the policy updates occur at the vicinity of a given reference controller. The introduced methodology splits the environment dynamics into a sum of a known and an unknown environment dynamics. Then, the authors show that the learned policy inherits the Lyapunov stability guarantees that the reference controller provides for the known environment dynamics. However, these guarantees are not applicable for an environment with fully unknown dynamics. In contrast, the safety modeling employed in this paper is specialized for such environments.

Finally, we review \textit{shared control protocols} ~\cite{broad2018learning, jansen2017synthesis,dragan2013policy, benloucif2019cooperative, ezeh2017comparing, jain2018recursive}, wherein a robot's commands are blended with its human user. Shared control protocols and blending controllers have the same goal of balancing the safety and performance of their given controllers. The algorithms in \cite{benloucif2019cooperative, ezeh2017comparing, jain2018recursive} use a weighted sum of the given controllers' outputs, whereas in this work, we switch between the given controllers. As a result, we are not restricted to continuous control signals. 
In \cite{broad2018learning}, the environment dynamics are unknown but assumed to be linear.
Finally, the authors in \cite{jansen2017synthesis} use semi-definite programming for blending controllers, which requires prior knowledge of the environment dynamics and the confidence level of each controller. 
In contrast to \cite{broad2018learning} and \cite{jansen2017synthesis}, the approach proposed in this paper is model-free. 


\section{Conclusions and future work} \label{sec:conclusions}
We developed a contextual multi-armed multi-objective bandit algorithm to solve the problem of blending controllers.
The algorithm achieves a sublinear Pareto regret, which characterizes its performance measure.
We also derived an upper bound on the algorithm's cumulative maximal loss, which shows how much cost or reward the algorithm sacrifices while execution.
We empirically demonstrated the algorithm's performance in the Safety Gym suite. The simulation results show that the algorithm succeeds in learning the appropriate switching strategy in the sense that the algorithm's total reward is higher than its given safe controller, and it accumulates less cost than the given performant controller.\par

For the future work, we are interested in considering adversarial environments to investigate when the Pareto regret achieved by Algorithm \ref{alg:1} is optimal. We will also consider a generalized UCB-based decision rule beyond the comparison of the estimated maximal losses. There, we are interested in the sufficient conditions under which the algorithm maintains the sublinearity of its Pareto regret.

\section{Broader impacts} \label{sec:broader impacts}
In this section, we retreat from the technicalities into the broader societal impacts of this work. We formulated the problem of blending controllers to address the safety concerns that arise with AI systems, for example safety concerns in autonomous driving \cite{koopman2017autonomous}. Blending controllers improves the safety of its input controllers while it is blind to the algorithm that drives them. Such blindness helps resolving the privacy concerns that may keep industrial companies from sharing their breakthroughs in safe control algorithms. Blending state-of-the-art safe controllers while protecting the privacy of their underlying algorithms may prompt industrial companies to at least share the output of their controllers in the name of protecting human lives.

On the other hand, even the most innocent intentions may lead to negative consequences. Multiple studies have found human users trusting automated systems in inappropriate circumstances, see \cite{lewis2018role} and references therein. Improving the safety of artificial intelligent systems may exacerbate such over-reliance. We emphasize that although blending controllers improves the safety of the overall system, its level of safety depends on its input safe controllers. Therefore, it is crucial that potential users be warned to perceive any improvements in automation safety as 
a lower chance of safety hazards and not an elimination of its vulnerabilities.



\begin{thebibliography}{10}

\bibitem{abbasi2011improved}
Yasin Abbasi-Yadkori, D{\'a}vid P{\'a}l, and Csaba Szepesv{\'a}ri.
\newblock Improved algorithms for linear stochastic bandits.
\newblock In {\em Advances in Neural Information Processing Systems}, pages
  2312--2320, 2011.

\bibitem{achiam2017constrained}
Joshua Achiam, David Held, Aviv Tamar, and Pieter Abbeel.
\newblock Constrained policy optimization.
\newblock In {\em Proceedings of the 34th International Conference on Machine
  Learning-Volume 70}, pages 22--31. JMLR. org, 2017.

\bibitem{alshiekh2018safe}
Mohammed Alshiekh, Roderick Bloem, R{\"u}diger Ehlers, Bettina K{\"o}nighofer,
  Scott Niekum, and Ufuk Topcu.
\newblock Safe reinforcement learning via shielding.
\newblock In {\em Thirty-Second AAAI Conference on Artificial Intelligence},
  2018.

\bibitem{altman1999constrained}
Eitan Altman.
\newblock {\em Constrained Markov decision processes}, volume~7.
\newblock CRC Press, 1999.

\bibitem{bak2014real}
Stanley Bak, Taylor~T Johnson, Marco Caccamo, and Lui Sha.
\newblock Real-time reachability for verified simplex design.
\newblock In {\em 2014 IEEE Real-Time Systems Symposium}, pages 138--148. IEEE,
  2014.

\bibitem{bak2011sandboxing}
Stanley Bak, Karthik Manamcheri, Sayan Mitra, and Marco Caccamo.
\newblock Sandboxing controllers for cyber-physical systems.
\newblock In {\em 2011 IEEE/ACM Second International Conference on
  Cyber-Physical Systems}, pages 3--12. IEEE, 2011.

\bibitem{benloucif2019cooperative}
Amir Benloucif, Anh-Tu Nguyen, Chouki Sentouh, and Jean-Christophe Popieul.
\newblock Cooperative trajectory planning for haptic shared control between
  driver and automation in highway driving.
\newblock {\em IEEE Transactions on Industrial Electronics}, 66(12):9846--9857,
  2019.

\bibitem{berkenkamp2017safe}
Felix Berkenkamp, Matteo Turchetta, Angela Schoellig, and Andreas Krause.
\newblock Safe model-based reinforcement learning with stability guarantees.
\newblock In {\em Advances in neural information processing systems}, pages
  908--918, 2017.

\bibitem{boyd2004convex}
Stephen Boyd, Stephen~P Boyd, and Lieven Vandenberghe.
\newblock {\em Convex optimization}.
\newblock Cambridge university press, 2004.

\bibitem{broad2018learning}
Alexander Broad, Todd Murphey, and Brenna Argall.
\newblock Learning models for shared control of human-machine systems with
  unknown dynamics.
\newblock {\em arXiv preprint arXiv:1808.08268}, 2018.

\bibitem{cheng2019control}
Richard Cheng, Abhinav Verma, Gabor Orosz, Swarat Chaudhuri, Yisong Yue, and
  Joel~W Burdick.
\newblock Control regularization for reduced variance reinforcement learning.
\newblock {\em arXiv preprint arXiv:1905.05380}, 2019.

\bibitem{chow2017risk}
Yinlam Chow, Mohammad Ghavamzadeh, Lucas Janson, and Marco Pavone.
\newblock Risk-constrained reinforcement learning with percentile risk
  criteria.
\newblock {\em The Journal of Machine Learning Research}, 18(1):6070--6120,
  2017.

\bibitem{chow2018lyapunov}
Yinlam Chow, Ofir Nachum, Edgar Duenez-Guzman, and Mohammad Ghavamzadeh.
\newblock A lyapunov-based approach to safe reinforcement learning.
\newblock In {\em Advances in neural information processing systems}, pages
  8092--8101, 2018.

\bibitem{dalal2018safe}
Gal Dalal, Krishnamurthy Dvijotham, Matej Vecerik, Todd Hester, Cosmin
  Paduraru, and Yuval Tassa.
\newblock Safe exploration in continuous action spaces.
\newblock {\em arXiv preprint arXiv:1801.08757}, 2018.

\bibitem{dragan2013policy}
Anca~D Dragan and Siddhartha~S Srinivasa.
\newblock A policy-blending formalism for shared control.
\newblock {\em The International Journal of Robotics Research}, 32(7):790--805,
  2013.

\bibitem{drugan2013designing}
Madalina~M Drugan and Ann Nowe.
\newblock Designing multi-objective multi-armed bandits algorithms: A study.
\newblock In {\em The 2013 International Joint Conference on Neural Networks
  (IJCNN)}, pages 1--8. IEEE, 2013.

\bibitem{ezeh2017comparing}
Chinemelu Ezeh, Pete Trautman, Catherine Holloway, and Tom Carlson.
\newblock Comparing shared control approaches for alternative interfaces: a
  wheelchair simulator experiment.
\newblock In {\em 2017 IEEE International Conference on Systems, Man, and
  Cybernetics (SMC)}, pages 93--98. IEEE, 2017.

\bibitem{fulton2018safe}
Nathan Fulton and Andr{\'e} Platzer.
\newblock Safe reinforcement learning via formal methods: Toward safe control
  through proof and learning.
\newblock In {\em Thirty-Second AAAI Conference on Artificial Intelligence},
  2018.

\bibitem{github}
Parham Gohari.
\newblock Blending controllers.
\newblock \url{https://github.com/parhamgohari/blending-controllers}, 2013.

\bibitem{jain2018recursive}
Siddarth Jain and Brenna Argall.
\newblock Recursive bayesian human intent recognition in shared-control
  robotics.
\newblock In {\em 2018 IEEE/RSJ International Conference on Intelligent Robots
  and Systems (IROS)}, pages 3905--3912. IEEE, 2018.

\bibitem{jansen2017synthesis}
Nils Jansen, Murat Cubuktepe, and Ufuk Topcu.
\newblock Synthesis of shared control protocols with provable safety and
  performance guarantees.
\newblock In {\em 2017 American Control Conference (ACC)}, pages 1866--1873.
  IEEE, 2017.

\bibitem{koopman2017autonomous}
Philip Koopman and Michael Wagner.
\newblock Autonomous vehicle safety: An interdisciplinary challenge.
\newblock {\em IEEE Intelligent Transportation Systems Magazine}, 9(1):90--96,
  2017.

\bibitem{lattimore2018bandit}
Tor Lattimore and Csaba Szepesv{\'a}ri.
\newblock Bandit algorithms.
\newblock {\em preprint}, page~28, 2018.

\bibitem{lee2005dependable}
Kihwal Lee and Lui Sha.
\newblock A dependable online testing and upgrade architecture for real-time
  embedded systems.
\newblock In {\em 11th IEEE International Conference on Embedded and Real-Time
  Computing Systems and Applications (RTCSA'05)}, pages 160--165. IEEE, 2005.

\bibitem{lewis2018role}
Michael Lewis, Katia Sycara, and Phillip Walker.
\newblock The role of trust in human-robot interaction.
\newblock In {\em Foundations of trusted autonomy}, pages 135--159. Springer,
  Cham, 2018.

\bibitem{2019arXiv190512879L}
Shiyin {Lu}, Guanghui {Wang}, Yao {Hu}, and Lijun {Zhang}.
\newblock {Multi-Objective Generalized Linear Bandits}.
\newblock {\em arXiv e-prints}, page arXiv:1905.12879, May 2019.

\bibitem{mnih2015human}
Volodymyr Mnih, Koray Kavukcuoglu, David Silver, Andrei~A Rusu, Joel Veness,
  Marc~G Bellemare, Alex Graves, Martin Riedmiller, Andreas~K Fidjeland, Georg
  Ostrovski, et~al.
\newblock Human-level control through deep reinforcement learning.
\newblock {\em Nature}, 518(7540):529--533, 2015.

\bibitem{raybenchmarking}
Alex Ray, Joshua Achiam, and Dario Amodei.
\newblock Benchmarking safe exploration in deep reinforcement learning.

\bibitem{sadraddini2018formal}
Sadra Sadraddini and Calin Belta.
\newblock Formal guarantees in data-driven model identification and control
  synthesis.
\newblock In {\em Proceedings of the 21st International Conference on Hybrid
  Systems: Computation and Control (part of CPS Week)}, pages 147--156, 2018.

\bibitem{schulman2015trust}
John Schulman, Sergey Levine, Pieter Abbeel, Michael Jordan, and Philipp
  Moritz.
\newblock Trust region policy optimization.
\newblock In {\em International conference on machine learning}, pages
  1889--1897, 2015.

\bibitem{schulman2017proximal}
John Schulman, Filip Wolski, Prafulla Dhariwal, Alec Radford, and Oleg Klimov.
\newblock Proximal policy optimization algorithms.
\newblock {\em arXiv preprint arXiv:1707.06347}, 2017.

\bibitem{sha2001using}
Lui Sha.
\newblock Using simplicity to control complexity.
\newblock {\em IEEE Software}, (4):20--28, 2001.

\bibitem{verma2019imitation}
Abhinav Verma, Hoang Le, Yisong Yue, and Swarat Chaudhuri.
\newblock Imitation-projected programmatic reinforcement learning.
\newblock In {\em Advances in Neural Information Processing Systems}, pages
  15726--15737, 2019.

\bibitem{zhu2019inductive}
He~Zhu, Zikang Xiong, Stephen Magill, and Suresh Jagannathan.
\newblock An inductive synthesis framework for verifiable reinforcement
  learning.
\newblock In {\em Proceedings of the 40th ACM SIGPLAN Conference on Programming
  Language Design and Implementation}, pages 686--701, 2019.

\end{thebibliography}

\bibliographystyle{plain}

\appendix
\section*{Appendices}
\section{Incremental $\mathbf{\ell^2}$-regularized least squares}\label{appendix: incremental implementation}
Recall that at stage $t$, the $\ell^2$-regularized least-squares estimate of $\theta_{*,i}$, the coefficient vector corresponding to objective $i$, is
\begin{equation}\label{eq:appendix a theta hat}
    \hat{\theta}_{t,i} := \left(\Psi_{1:t}^\top \Psi_{1:t}^{}+ \lambda I\right)^{-1} \Psi_{1:t}^\top Y_{1:t}^{i}.
\end{equation}
Let $V_0 := \lambda I$, $\forall i\in[m]: W_{0,i} := 0_{d\times 1}$, and for all $t \ge 1$,
\begin{equation}
    V_t := \Psi_{1:t}^\top \Psi_{1:t}^{}+ \lambda I \quad \text{and} \quad W_{t,i} := \Psi_{1:t}^\top Y_{1:t}^{i}.
\end{equation}
Then, at stage $t+1$, we can write
\begin{equation} \label{eq: appendix a eq 1}
    V_{t+1} = \left[\begin{array}{cccc} \Psi_1 & \dots & \Psi_t & \Psi_{t+1}  \end{array} \right] \left[ \arraycolsep=1.4pt\def\arraystretch{1.8}\begin{array}{c} \Psi_1^\top \\ \vdots \\ \Psi_t^\top \\  \Psi_{t+1}^\top  \end{array}\right] + \lambda I = V_t + \Psi_{t+1}^{} \Psi_{t+1}^\top,
\end{equation}
and
\begin{equation}\label{eq: appendix a eq 2}
    W_{t+1,i} =  \left[\begin{array}{cccc} \Psi_1 & \dots & \Psi_t & \Psi_{t+1}  \end{array} \right] \left[ \arraycolsep=1.4pt\def\arraystretch{1.8}\begin{array}{c} (y_1)_i \\ \vdots \\ (y_t)_i \\ (y_{t+1})_i  \end{array}\right] = W_{t,i} + (y_{t+1})_i \Psi_{t+1}.
\end{equation}
By \eqref{eq: appendix a eq 1} and \eqref{eq: appendix a eq 2}, we arrive at
\begin{equation}
    \hat{\theta}_{t,i} = V_t^{-1} W_{t,i}, \quad \forall i\in[m],
\end{equation}
which no longer requires storing all the values of $\Psi$ and $y$.

\section{Proof of Lemma \ref{lemma: epsilon and dominance}} \label{appendix: proof of lemma 2}
Assume that there exists an arm, $x'\in\mathcal{X} \setminus
\mathrm{argmin}_{x\in\mathcal{X}}  \hat \epsilon(x)$, whose
UCB index, $\hat \mu_{t,x'}$, Pareto dominates that of the
picked arm, $x_t$. Then, by the definition of $\hat\epsilon$
in \eqref{eq:epsilon hat definition}, it follows that
$\hat\epsilon(x') = 0$, and therefore $x' \in
\mathrm{argmin}_{x\in\mathcal{X}}  \hat \epsilon(x)$, which
is in contradiction with the initial assumption.

\section{Proof of Theorem \ref{theorem: Sublinear Pareto regret}} \label{appendix: sublinear pareto regret}

Let $x_t\in\mathcal{X}$ denote the arm that is chosen by the algorithm and $z_t\in\mathcal{Z}$ denote the state of the environment at stage $t$. For all objectives $i\in[m]$, let
\begin{equation} \label{eq:ThetaTilde}
     \hat{\mu}_{t,x_t}^i = \Tilde{\theta}_{t,i} \cdot \psi(z_t,x_t), \quad \text{with} \quad \Tilde{\theta}_{t,i} := \mathop{\text{argmax}}\limits_{\theta \in \mathcal{C}_t^i} \theta \cdot \psi(z_t,x_t).
\end{equation}
The algorithm chooses amongst the set of arms whose UCB indices are not Pareto dominated by that of any other arm; therefore, for each arm $x\in\mathcal{X}$, there exists an objective $j\in[m]$ such that $\hat{\mu}_{t,x_t}^j \geq \hat{\mu}_{t,x}^j$.
By Lemma \ref{lemma:ConfidenceSet}, we have that with probability at least $1-\delta$,
\begin{equation*}
    \hat{\mu}_{t,x}^j = \max\limits_{\theta \in \mathcal{C}_{t}^j} \theta \cdot \psi(z_t,x) \geq \theta_{*,j} \cdot \psi(z_t,x).
\end{equation*}
Hence, with probability at least $1-\delta$, for all arms $x\in\mathcal{X}$,
\begin{equation} \label{eq: IndexJ}
    {\Tilde{\theta}}_{t,j} \cdot \psi(z_t,x_t) \geq \theta_{*,j} \cdot \psi_t(z_t,x).
\end{equation}
We now consider the case in which the algorithm has picked the wrong arm, \textit{i.e.,} there exists an arm $x'\in\mathcal{X}$ such that $\theta_{*,j} \cdot \psi(z_t,x_t) < \theta_{*,j} \cdot \psi(z_t,x')$.
Then, for any $t\ge1$, we can write
\begin{align}
    \theta_{*,j} \cdot \psi(z_t,x') -  \theta_{*,j} \cdot \psi(z_t,x_t) &\leq \tilde \theta_{t,j} \cdot \psi(z_t,x_t) - \theta_{*,j} \cdot \psi(z_t,x_t) \nonumber \\
    & = (\tilde \theta_{t,j} - \hat \theta_{t,j})\cdot \psi(z_t,x_t) + (\hat \theta_{t,j} - \theta_{*,j})\cdot \psi(z_t,x_t) \nonumber\\
    & \leq \left(\|\tilde \theta_{t,j} - \hat \theta_{t,j}\|_{V_{t}} + \|\hat \theta_{t,j} - \theta_{*,j}\|_{V_{t}}\right)\|\psi(z_t,x_t)\|_{V^{-1}_t} \label{eq:Holder}\\
    & \leq 2\beta_{t} \|\psi(z_t,x_t)\|_{V^{-1}_{t}} \le 2\beta_{T} \|\psi(z_t,x_t)\|_{V^{-1}_{t}}. \nonumber
\end{align}
The first inequality is a result of \eqref{eq: IndexJ}. Inequality \eqref{eq:Holder} holds because of the H{\"o}lder's inequality. Finally the last inequality follows from the fact that $\beta_t$ is an increasing function of $t$. Therefore, 
\begin{equation}\label{eq:optimalitygap}
  \theta_{*,j}\cdot \psi(z_t,x') - \theta_{*,j}\cdot \psi(z_t,x_t) \leq 2\beta_{T} \|\Psi_t\|_{V^{-1}_{t}}.
\end{equation}
Notice that the upper bound in \eqref{eq:optimalitygap} is independent of index $j$ and arm $x'$. Therefore, the Pareto suboptimality gap of $x_t$, $\Delta_t x_t$, is also upper bounded by $2\beta_{T} \|\Psi_t\|_{V^{-1}_{t}}$.
By the Cauchy-Schwarz inequality, we have that for all $i\in[m]$ and all $x\in\mathcal{X}$, $\theta_{*,i} \cdot \psi(z_t,x) \leq \|\theta_{*,i}\|_2 \|\Psi(z_t,x)\|_2 \leq SL$.
Let $(a \wedge b):=\max(a,b)$. Then, we can write
\begin{equation}\label{eq:bound}
      \Delta_t x_t  \leq \left( 2SL \wedge 2\beta_{T}
    \|\Psi_t\|_{V^{-1}_t}\right) = 2\beta_{T} \left(\frac{SL}{\beta_{T}} \wedge  \|\Psi_t\|_{V^{-1}_t}\right) \leq 
    2\beta_{T} \left(1 \wedge  \|\Psi_t\|_{V^{-1}_t}\right).
\end{equation}
Taking the sum of both sides of \eqref{eq:bound} and using the Cauchy-Schwarz inequality, we can write
\begin{equation*}
    \mathrm{PR}(h_T) \leq  \sqrt{8\beta_{T}^2T\sum\limits_{t=1}^T\left(1 \wedge  \|\Psi_t\|_{V^{-1}_t}^2\right)}.
\end{equation*}
Finally, by Lemma 11 in \cite{abbasi2011improved} we have that with probability at least $1-\delta$,
\begin{equation*}
    \mathrm{PR}(h_T) \leq 8 \left(\sigma \sqrt{d\log\left(\frac{1+TL^2/\lambda}{\delta}\right)} + \lambda^{1/2} S\right)^2\sqrt{2Td\log(\lambda + TL/d)},
\end{equation*}
which concludes the proof.

\section{Proof of Theorem \ref{theorem:PRD}} \label{appendix: theorem 2}
We start off the proof with reformulating $\epsilon$ in
\eqref{eq:epsilon}. For any arm $x\in\mathcal{X}$, we can write
\begin{equation}
    \epsilon_t(x) = \max \left\{ \max\limits_{k\in[m]}\max\limits_{x'\in\mathcal{X}} \left(\mu_{t,x'}^k - \mu_{t,x}^k\right), 0 \right\}.
\end{equation}
Analogously, for all arms $x\in\mathcal{X}$, we have that
\begin{equation}
    \hat\epsilon_t(x) = \max \left\{ \max\limits_{k\in[m]}\max\limits_{x'\in\mathcal{X}} \left(\hat\mu_{t,x'}^k - \hat\mu_{t,x}^k \right), 0 \right\}.
\end{equation}
Let $x_t\in\mathcal{X}$ be the arm that the algorithm picks at stage $t$. Then, $\hat\epsilon(x_t) = \min_{x\in\mathcal{X}} \hat\epsilon(x)$. Then, for all arms $x\in\mathcal{X}$ and objectives $j\in[m]$, there exist an arm $x'\in\mathcal{X}$ and objective $k\in[m]$ such that
\begin{equation} \label{eq:appendix C eq 3}
    \hat\mu_{t,x}^j -\hat \mu_{t,x_t}^j \le \hat\mu_{t,x'}^k - \hat\mu_{t,x_t}^{k}.
\end{equation}
By Lemma \ref{lemma:ConfidenceSet}, with high probability $1-\delta$, it holds that $(\hat\mu_{t,x})_j \ge (\mu_{t,x})_j$. Rearranging \eqref{eq:appendix C eq 3}, we arrive at
\begin{equation} \label{eq: appendix C eq 4}
    \mu_{t,x}^j + \hat \mu_{t,x_t}^{k} \le \hat\mu_{t,x_t}^j + \hat\mu_{t,x'}^{k}.
\end{equation}
For all arms $x\in\mathcal{X}$ and objectives $j\in[m]$, we can write
\begin{align}\label{eq: appendix C eq 5}
    \mu_{t,x}^j - \mu_{t,x_t}^j &\le \hat\mu_{t,x_t}^j + \hat\mu_{t,x'}^{k} - \hat\mu_{t,x_t}^{k} - \mu_{t,x_t}^j \nonumber\\
    &= (\tilde\theta_{t,j} - \theta_{*,j}) \cdot \Psi_t + \hat\mu_{t,x'}^{k} - \hat\mu_{t,x_t}^{k} \nonumber\\
    &\le 2\beta_{T} \|\Psi_t\|_{V^{-1}_{t}} + \hat\mu_{t,x'}^{k} - \hat\mu_{t,x_t}^{k}\nonumber\\
    &\le 2\beta_{T} \|\Psi_t\|_{V^{-1}_{t}} + \max\limits_{k\in[m]}\max\limits_{x'\in\mathcal{X}} \left(\hat\mu_{t,x'}^k- \hat \mu_{t,x_t}^k \right)\nonumber \\
    &\le 2\beta_{T} \|\Psi_t\|_{V^{-1}_{t}} + \hat \epsilon(x_t),
\end{align}
where the first inequality is resulted from \eqref{eq: appendix C eq 4}, and the second inequality follows from the same argument as in the proof of Theorem \ref{theorem: Sublinear Pareto regret} in Appendix \ref{appendix: sublinear pareto regret}.
Equation \eqref{eq: appendix C eq 5} holds for all arms $x$ and objectives $j$, hence
\begin{equation}\label{eq:appendix C eq 6}
    0 \le \epsilon_t(x_t)   \le 2\beta_{T} \|\Psi_t\|_{V^{-1}_{t}} + \hat \epsilon_t(x_t).
\end{equation}
Taking the average of both sides of \eqref{eq:appendix C eq 6}, followed by the results in Theorem \ref{theorem: Sublinear Pareto regret}, we arrive at
\begin{equation}
    0 \le \frac{1}{T} \sum\limits_{t=1}^{T} \epsilon_t(x_t)
    \le \frac{1}{T}{\sum\limits_{t=1}^{T} \hat
    \epsilon_t(x_t)} + \mathcal{O}(1/\sqrt{T}).
\end{equation}
This completes the proof.

\end{document}